\theoremstyle{plain}
\newtheorem{theorem}{Theorem}[section] 
\newtheorem{lemma}[theorem]{Lemma}     
\newtheorem{proposition}[theorem]{Proposition}
\theoremstyle{definition}
\newtheorem{definition}[theorem]{Definition}
\newtheorem{example}[theorem]{Example}
\numberwithin{equation}{section}
\newcommand{\Fq}{\mathbb{F}_q}
\newcommand{\Fqsq}{\mathbb{F}_{q^2}}
\begin{document}

\title{\textbf{Reinforcement Learning–Enhanced Greedy Decoding for Quantum Stabilizer Codes over $F_q$}}

\author{
\textsc{Vahid Nourozi} \\
%\textsc{Farzaneh Ghanbari}$^{2} \\
\vspace{-3ex}\\
\small The Klipsch School of Electrical and Computer Engineering,\\ 
\small New Mexico State University, Las Cruces, NM 88003, USA\\
%\small $^2$Department of Pure Mathematics, Faculty of Mathematical Sciences,\\ 
%\small Tarbiat Modares University, P.O.Box:14115-134, Tehran, Iran\\
\small \texttt{nourozi@nmsu.edu}
}
\date{}

\maketitle

\begin{abstract}
\noindent
We construct new classical Goppa codes and corresponding quantum stabilizer codes from plane curves defined by separated polynomials. In particular, over $\mathbb{F}_3$ with the Hermitian curve $y^3 + y = x^4$, we obtain a ternary code of length 27, dimension 13, distance 4, which yields a $[[27,13,4]]_3$ quantum code. To decode, we introduce an \emph{RL-on-Greedy} algorithm: first apply a standard greedy syndrome decoder, then use a trained Deep Q-Network to correct any residual syndrome. Simulation under a depolarizing noise model shows that RL-on-Greedy dramatically reduces logical failure (e.g., from $\sim8\%$ to $\sim0.5\%$ at $p=0.01$) compared to greedy alone. Our work thus broadens the class of Goppa‐ and quantum‐stabilizer codes from separated‐polynomial curves and delivers a learned decoder with near‐optimal performance.

\vspace{0.5em}\noindent
\textbf{Keywords:} Goppa codes; Quantum stabilizer codes; Separated‐polynomial curves; Hermitian codes; Reinforcement learning decoding; Deep Q‐network.

\end{abstract}

\section{Introduction}
Since the pioneering work of Goppa~\cite{77tores}, algebraic geometry (AG) codes have offered a powerful framework for constructing linear codes with good parameters. Goppa's idea relates a linear code to a (projective, geometrically irreducible, non-singular, algebraic) curve $\mathcal{X}$ defined over a finite field $\mathbb{F}_q$. Two divisors $D$ and $G$ on $\mathcal{X}$ define the code, with the designed minimum distance $d^*$ satisfying
\[
    d \geq d^* = n - \deg(G).
\]
AG codes remain a compelling approach in coding theory, especially when dealing with curves that possess many rational points~\cite{66tores,33tores}.

One particularly fruitful family of curves is the Hermitian family, which has been extensively investigated in AG code contexts~\cite{10maria,24maria,25maria,26maria,44maria,46maria,47maria}. Additionally, Hermitian self-orthogonal classical codes have found applications in quantum coding~\cite{44jin,55jin,66jin}. In a parallel vein, research on Goppa codes from hyperelliptic curves has also advanced~\cite{aut,N23,miss,auta,shiraz}, alongside related investigations on plane curves given by separated polynomials~\cite{M23,behsep,code,esfahan,Nourozi2024,phd}. Optimization frameworks are crucial for solving complex problems in power systems and quantum coding theory. Mixed-integer programming is employed to balance operational efficiency and cost in ancillary service markets (Refs \cite{hamid, hamid1}), while robust optimization techniques ensure reliable reserve deliverability under uncertainty by reducing computational complexity without sacrificing system reliability (Ref \cite{hamid2}). These approaches illustrate how optimization-based methods effectively manage trade-offs between performance metrics and constraints in both power and quantum system designs.

In this paper, we focus on \emph{plane curves given by separated polynomials}. We derive AG codes and determine their parameters. We then show how to convert these classical codes into quantum stabilizer codes, illustrating that the resulting quantum codes achieve promising error-correcting performance. 

\paragraph{Organization of the Paper.} 
In Section~\ref{sec:prelim}, we recall fundamental results on plane curves given by separated polynomials and provide an overview of AG codes. Section~\ref{sec:goppa} presents the construction of Goppa codes arising from the curves, along with dimension and distance calculations. In Section~\ref{sec:quantum}, we exploit these AG codes to build quantum stabilizer codes and provide examples demonstrating their effectiveness. Finally, the paper concludes with a brief summary of results.

\section{Preliminaries}\label{sec:prelim}
\subsection{Plane curves given by separated polynomials}
Throughout this work, $\mathcal{X}$ is a plane curve over the algebraic closure $K$ of $\mathbb{F}_q$ defined by:
\[
  A(Y) \;=\; B(X),
\]
subject to the following conditions:
\begin{enumerate}[(1)]
\item $\deg(\mathcal{X}) \ge 4$;
\item $A(Y) = a_n Y^{p^n} + a_{n-1}Y^{p^n -1} + \cdots + a_0 Y$, with $a_j \in K$ and $a_0,a_n\neq 0$; i.e., $A$ is an additive polynomial over $K$;
\item $B(X) = b_m X^m + b_{m-1} X^{m-1} + \cdots + b_0$, where $b_j \in K$ and $b_m \neq 0$;
\item $m \not\equiv 0 \pmod{p}$;
\item $n \ge 1$ and $m \ge 2$.
\end{enumerate}

For instance, an important example, and the one we focus on, is the curve
\[
   y^q + y \;=\; x^m,
\]
where $q=p^n$ and $|m - p^n|=1$. Under these conditions, we have the following well-known facts~\cite[Section 12.1]{tores}:
\begin{lemma}
\label{lemma:basicSeparated}
For the above plane curve $\mathcal{X}$ given by $A(Y) = B(X)$:
\begin{enumerate}[(i)]
\item $\mathcal{X}$ is irreducible, having at most one singular point.
\item If $|m - p^n| = 1$, then $\mathcal{X}$ is non-singular.
\item The genus of $\mathcal{X}$ is $g = \frac{(p^n - 1)(m -1)}{2}$.
\end{enumerate}
\end{lemma}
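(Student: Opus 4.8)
The plan is to reduce every assertion to an analysis of the single ``point at infinity,'' exploiting two structural features of the defining equation $F(X,Y) := A(Y) - B(X)$. First, since $A$ is additive its formal derivative is the constant $A'(Y) = a_0 \neq 0$, so $F_Y \equiv a_0$ never vanishes; hence the \emph{affine} curve is smooth and, moreover, the projection $x \colon \mathcal{X} \to \mathbb{P}^1$ is separable with no finite ramification. Second, condition~(4) gives $\gcd(p^n,m)=1$, and this coprimality is the engine behind all three claims.

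For the irreducibility and the ``at most one singular point'' half of~(i) I would argue at infinity. Viewing $F$ as a polynomial in $Y$ over the completion $K((1/x))$ and reading off its Newton polygon with respect to $v_{\infty_x}$, the relevant vertices are $(0,-m)$ (from $-B(x)$) and $(p^n,0)$ (from $a_nY^{p^n}$), while every intermediate monomial $a_iY^{p^i}$ and $a_0Y$ sits strictly above the segment joining them. Thus the polygon is a single segment of run $p^n$ and rise $m$; since $\gcd(p^n,m)=1$ it contains no interior lattice point, forcing $F$ to be irreducible over $K((1/x))$, hence over $K(x)$, and $K$ being algebraically closed this is geometric irreducibility. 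Simultaneously this yields a \emph{unique} place $P_\infty$ above $x=\infty$, totally (and wildly) ramified with $e=p^n$, so that $v_{P_\infty}(x)=-p^n$ and, from $p^n v_{P_\infty}(y)=m\,v_{P_\infty}(x)$, $v_{P_\infty}(y)=-m$. Homogenizing $F$ to degree $d=\max(p^n,m)$ and setting $Z=0$, the leading form is a pure power of a single variable (as $m\neq p^n$, forced by $\gcd(m,p^n)=1$ together with $m\ge 2$), so the plane model has exactly one point on the line at infinity; combined with affine smoothness this gives at most one singular point.

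For~(ii) I would run the local computation at that unique infinite point under the hypothesis $|m-p^n|=1$. In the affine chart containing it, the dehomogenized equation has lowest-degree part of degree $|p^n-m|=1$: concretely the $b_m$-term contributes $z^{\,p^n-m}=z$ when $m=p^n-1$, and the symmetric term plays this role when $m=p^n+1$. Because $b_m\neq 0$ this linear part is nonzero, the tangent space is one-dimensional, and $\mathcal{X}$ is smooth at infinity; together with the affine smoothness already noted, $\mathcal{X}$ is non-singular.

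For the genus~(iii) I would avoid the wild ramification groups altogether and instead compute the canonical class through the differential $dx$. Differentiating $A(y)=B(x)$ gives $a_0\,dy=B'(x)\,dx$, and since $x-c$ is a uniformizer at every (unramified) finite place, $v_P(dx)=0$ there; the whole canonical divisor is supported at $P_\infty$. A short local computation with a uniformizer $\pi$ at $P_\infty$, using $v_{P_\infty}(y)=-m$ and $p\nmid m$, yields $v_{P_\infty}(dy)=-(m+1)$, while $v_{P_\infty}\big(B'(x)\big)=(m-1)v_{P_\infty}(x)=-p^n(m-1)$ since $\deg B'=m-1$ with $mb_m\neq 0$. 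Hence
\[
  2g-2=\deg(dx)=v_{P_\infty}\!\Big(\tfrac{a_0}{B'(x)}\,dy\Big)=p^n(m-1)-(m+1),
\]
which rearranges to $g=\tfrac{(p^n-1)(m-1)}{2}$. This is a purely function-field computation, so it holds even when the plane model is singular, and it agrees with the smooth plane-curve count $\binom{d-1}{2}$ in the case of~(ii). The main obstacle throughout is the wild ramification at $P_\infty$: the place $x=\infty$ is where all the subtlety concentrates, and the argument succeeds precisely because $\gcd(p^n,m)=1$ forces a single totally ramified place, while $p\nmid m$ keeps $dy$ and $B'(x)$ under control despite the characteristic-$p$ phenomena.
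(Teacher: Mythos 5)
Your proof is correct, but note that the paper itself offers no proof of this lemma at all: it is stated as a collection of ``well-known facts'' with a pointer to Section~12.1 of the standard reference on curves given by separated polynomials (the citation key \texttt{tores} is in fact dangling in the bibliography). So there is nothing internal to compare against; what you have produced is a self-contained substitute for the literature citation, and it holds up. Your three pillars are all sound: (a) additivity of $A$ gives $F_Y \equiv a_0 \neq 0$, so the affine model is smooth and $x$ is a separable, finitely unramified covering; (b) the Newton polygon of $F$ over $K((1/x))$ is a single segment of run $p^n$ and rise $m$ with $\gcd(p^n,m)=1$ (this is exactly condition~(4)), which simultaneously yields irreducibility, a unique totally ramified place $P_\infty$ with $v_{P_\infty}(x)=-p^n$, $v_{P_\infty}(y)=-m$, and---via the leading form of the homogenization---a unique point of the plane model at infinity, giving (i) and, with the linear term $b_m z$ (resp.\ $a_n z$) when $|m-p^n|=1$, also (ii); (c) the genus drops out of $\deg(\mathrm{div}(dx)) = v_{P_\infty}(dx) = p^n(m-1)-(m+1)$, using $a_0\,dy = B'(x)\,dx$ and $v_{P_\infty}(df) = v_{P_\infty}(f)-1$ when $p \nmid v_{P_\infty}(f)$. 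This last step is the nicest feature of your route: the textbook treatment of such Artin--Schreier-type covers typically reaches the genus through Riemann--Hurwitz with the different exponent at the wildly ramified place (or through resolution of the singular point of the plane model), whereas your computation of the canonical divisor through $dx$ sidesteps higher ramification groups entirely and works directly on the function field, so it remains valid even when the plane model is singular. Two small points you could tighten: irreducibility in $K((1/x))[Y]$ gives irreducibility in $K(x)[Y]$, and you should invoke Gauss's lemma (the content of $F$ over $K[x]$ is $1$ since $a_n$ is a constant) to conclude irreducibility as a bivariate polynomial; and the paper's condition~(2) contains a typo ($a_{n-1}Y^{p^n-1}$ should be $a_{n-1}Y^{p^{n-1}}$), which you have silently and correctly repaired by treating $A$ as a genuine additive polynomial.
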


In what follows, we consider the specific curve $\mathcal{X}$ defined by
\[
   y^q + y = x^m, \quad \text{with } q - m = 1.
\]

\subsection{Algebraic Geometry Codes}
Let $\Fq(\mathcal{X})$ be the field of $\Fq$-rational functions of $\mathcal{X}$ and $\mathrm{Div}_q(\mathcal{X})$ the group of divisors defined over $\Fq$. For $f \in \Fq(\mathcal{X}) \setminus \{0\}$, the divisor of $f$ is $\mathrm{div}(f)$. For $A \in \mathrm{Div}_q(\mathcal{X})$, the Riemann--Roch space is
\[
  \mathcal{L}(A) \;:=\; \{\, f \in \Fq(\mathcal{X}) \setminus \{0\} : A + \mathrm{div}(f)\succeq 0 \}\cup \{0\}.
\]
The dimension of $\mathcal{L}(A)$ over $\Fq$ is denoted by $\ell(A)$.

Let $P_1,\dots,P_n$ be pairwise distinct $\Fq$-rational points of $\mathcal{X}$, and define 
\[
   D = P_1 + \cdots + P_n, \quad \deg(D) = n.
\]
Choose another divisor $G$ such that $\mathrm{supp}(G)\cap \mathrm{supp}(D)=\varnothing$.

\begin{definition}
\label{def:AGcode}
The \emph{Algebraic Geometry Code} (or \emph{Goppa Code}) $C_{\mathcal{L}}(D,G)$ is defined by
\[
  C_{\mathcal{L}}(D,G) \;:=\; \bigl\{\bigl(f(P_1),\dots,f(P_n)\bigr) \,\mid\, f \in \mathcal{L}(G)\bigr\} \;\subseteq\; \Fq^n.
\]
\end{definition}

The minimum distance $d$ of $C_{\mathcal{L}}(D,G)$ satisfies 
\[
   d \;\geq\; d^* = n - \deg(G),
\]
where $d^*$ is called the Goppa \emph{designed distance} of $C_{\mathcal{L}}(D,G)$. Moreover, if $\deg(G) > 2g - 2$, then by the Riemann--Roch theorem 
\[
   k \;=\; \dim_{\Fq}(C_{\mathcal{L}}(D,G)) \;=\; \deg(G) - g + 1.
\]
The dual code satisfies
\[
   C_{\mathcal{L}}(D,G)^\perp \;=\; C_{\mathcal{L}}\bigl(D, D-G+K\bigr),
\]
where $K$ is a canonical divisor, and the dual code has dimension $k^\perp = n - k$ and distance $d^\perp \geq \deg(G) - 2g +2$.

\paragraph{Hermitian Self-Orthogonality.}
We recall that for two vectors $a=(a_1,\dots,a_n), b=(b_1,\dots,b_n)\in \Fq^n$, the Hermitian inner product is 
\[
   \langle a, b\rangle_H := \sum_{i=1}^{n} a_i\,b_i^q.
\]
For a linear code $C\subseteq \Fq^n$, the Hermitian dual is 
\[
   C^{\perp H} \;:=\; \bigl\{\,v\in \Fq^n : \langle v,c\rangle_H =0\,\; \forall\, c\in C\bigr\}.
\]
A code $C$ is called Hermitian self-orthogonal if $C \subseteq C^{\perp H}$.

\section{Goppa Codes from the Curve \texorpdfstring{$\mathcal{X}$}{X}}\label{sec:goppa}
Let us now focus on the specific curve 
\[
   \mathcal{X}:\quad y^q + y = x^m, \quad \text{with } q - m =1.
\]
Define the following sets of $\Fqsq$-rational points:
\[
   \mathcal{G} := \mathcal{X}(\Fq), \quad \mathcal{D} := \mathcal{X}(\Fqsq)\setminus \mathcal{G}.
\]
Then
\[
   \mathcal{G} = \bigl\{P \,\in\, \mathcal{X}(\Fqsq) : t(P) = 0 \bigr\},
\]
where $t$ is an $\Fqsq$-rational function.

Choose divisors
\[
   G \;=\; \sum_{P\in \mathcal{G}} rP, 
   \quad
   D \;=\; \sum_{P\in \mathcal{D}} P,
\]
with $\deg(G) = r(q^2 - q +1)$ and $\deg(D) = 2q^3 -2q^2 +2q +1$. Let 
\[
   C_r \;:=\; C_{\mathcal{L}}(D,G),
\]
be the AG code over $\Fqsq$ of length $n=q^3$. By definition, the designed distance of $C_r$ is
\[
   d^* \;=\; n - \deg(G) \;=\; q^3 - r\,\bigl(q^2 - q +1\bigr).
\]
We analyze its dual and show that $C_r$ is monomially equivalent to a one-point code $C\bigl(D,r(q^2 - q +1)P_\infty\bigr)$ on $\mathcal{X}$.

\subsection{Key properties}
We summarize some important facts that lead to dimension and duality for these codes.

\begin{lemma}\label{2.11}
The Riemann--Roch space $\mathcal{L}(G)$, where $G=r(q^2 - q +1)P_\infty$, has the basis:
\[
  \bigl\{\, x^i y^j :\; i\ge 0,\; 0\le j \le q-1,\;\; i\,q \,+\, j(q-1)\,\le r \bigr\}.
\]
\end{lemma}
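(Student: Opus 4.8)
The plan is to regard $G$ as a one-point divisor supported at the unique place $P_\infty$ of $\mathcal{X}$ at infinity, with prescribed pole bound $r$, and to establish the basis by the classical pole-order method for separated-polynomial curves: first pin down the pole orders of $x$ and $y$ at $P_\infty$, then use their distinctness to get both linear independence and a tight spanning statement. Since $q-m=1$, the curve is $\mathcal{X}:\ y^q+y=x^{q-1}$. Homogenizing to degree $q$ gives $Y^q+YZ^{q-1}=X^{q-1}Z$, and setting $Z=0$ forces $Y=0$, so $[1:0:0]$ is the only point at infinity; by Lemma~\ref{lemma:basicSeparated}(ii) it is smooth, hence a single $\Fqsq$-rational place $P_\infty$. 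As $x$ and $y$ are regular on the affine chart, their only pole is $P_\infty$. Writing $v=v_{P_\infty}$ and using $v(y^q)=q\,v(y)<v(y)$ (because $v(y)<0$), the relation $y^q+y=x^{q-1}$ forces $q\,v(y)=(q-1)\,v(x)$; since $\gcd(q,q-1)=1$ the minimal pole solution is $v(x)=-q$, $v(y)=-(q-1)$, and these are the true orders because the semigroup $\langle q-1,q\rangle$ they generate has exactly $\tfrac{(q-1)(q-2)}{2}=g$ gaps, matching Lemma~\ref{lemma:basicSeparated}(iii). Thus $x^iy^j$ has pole divisor $(iq+j(q-1))P_\infty$ and lies in $\mathcal{L}(G)$ exactly when $iq+j(q-1)\le r$, which is the inequality in the statement.

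Next I would prove the listed monomials are $\Fqsq$-linearly independent by showing their pole orders at $P_\infty$ are pairwise distinct. If $i_1q+j_1(q-1)=i_2q+j_2(q-1)$ with $0\le j_1,j_2\le q-1$, then $q\mid(j_2-j_1)(q-1)$, and since $\gcd(q,q-1)=1$ and $|j_2-j_1|\le q-1$ we must have $j_1=j_2$, whence $i_1=i_2$. Functions having pairwise distinct valuations at a common place cannot satisfy a nontrivial linear relation (the term of minimal valuation would survive), so the listed set is independent and, by the previous paragraph, contained in $\mathcal{L}(G)$.

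The crux is the reverse containment: every $f\in\mathcal{L}(G)$ is a combination of the listed monomials. Here I would argue that any $f\in\mathcal{L}(G)$ has poles only at $P_\infty$, hence is regular on the affine curve $\mathcal{X}\setminus\{P_\infty\}$; by Lemma~\ref{lemma:basicSeparated}(ii) this curve is smooth, so its coordinate ring $R=\Fqsq[x,y]/(y^q+y-x^{q-1})$ is integrally closed and equals the full ring of functions with no affine poles, giving $f\in R$. Using $y^q=x^{q-1}-y$ to reduce every $y$-power below $q$, I can write $f=\sum_{0\le j\le q-1}c_{ij}\,x^iy^j$. Because the monomials have distinct pole orders, no cancellation occurs and $v_{P_\infty}(f)=-\max\{iq+j(q-1):c_{ij}\neq0\}$; membership $f\in\mathcal{L}(G)$ then forces $iq+j(q-1)\le r$ for every occurring monomial, yielding the reverse containment.

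I expect the spanning step to be the main obstacle, on two counts: justifying that regularity away from $P_\infty$ together with smoothness forces $f\in R$ (integral closedness of the affine coordinate ring), and verifying that the reduction modulo the curve relation introduces no hidden cancellation — both of which rest on the distinctness of pole orders established above. Once the admissible monomials are seen to biject with the semigroup elements $\le r$, one may optionally cross-check the count against Riemann--Roch, $\ell(G)=\deg(G)-g+1$, which is valid in the regime $\deg(G)>2g-2$ and confirms that the independent, spanning set has the correct cardinality.
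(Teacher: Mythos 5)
Your proof is correct in its overall architecture, and it reaches the conclusion by a genuinely different route than the paper. The paper's proof is a dimension count: it asserts the pole divisors $(x)_\infty=qP_\infty$, $(y)_\infty=(q-1)P_\infty$, notes the monomials lie in $\mathcal{L}(G)$ and are independent, and then invokes the Weierstrass semigroup structure at $P_\infty$ to identify the number of admissible monomials with $\dim\mathcal{L}(rP_\infty)$, so spanning follows from counting non-gaps (implicitly: the semigroup contains $\langle q-1,q\rangle$, which already has $(q-1)(q-2)/2=g$ gaps, hence equals it). You instead prove spanning structurally: smoothness of the affine curve makes $R=\Fqsq[x,y]/(y^q+y-x^{q-1})$ normal, hence equal to the ring of functions regular off $P_\infty$, so every $f\in\mathcal{L}(G)$ reduces (via $y^q=x^{q-1}-y$) to a combination of monomials with $0\le j\le q-1$, and the distinct-valuation argument pins down $v_{P_\infty}(f)$. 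This is self-contained and effectively re-derives the semigroup fact that the paper takes as known; your concluding Riemann--Roch count is exactly the paper's mechanism, demoted to a cross-check. (Both you and the paper silently read the pole bound in the statement as $r$, i.e. $G$ effectively $rP_\infty$; the paper's proof does the same when it computes $\dim\mathcal{L}(rP_\infty)$.)

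One step needs repair: your justification that $v(x)=-q$ and $v(y)=-(q-1)$ \emph{exactly}. From $q\,v(y)=(q-1)\,v(x)$ and $\gcd(q,q-1)=1$ you only get $(v(x),v(y))=(-qk,-(q-1)k)$ for some integer $k\ge1$, and the observation that $\langle q-1,q\rangle$ has exactly $g$ gaps does not rule out $k\ge2$: the Weierstrass semigroup merely \emph{contains} the semigroup generated by the two pole orders, so it could contain $\langle k(q-1),kq\rangle$ properly and still have $g$ gaps; matching gap counts is a consistency check, not a proof. The standard fix: the curve is an irreducible degree-$q$ cover of the $x$-line (Lemma~\ref{lemma:basicSeparated}(i)), so $\deg(x)_\infty=[\Fqsq(x,y):\Fqsq(x)]=q$; since $P_\infty$ is the unique pole of $x$ and is a degree-one place, $v(x)=-q$, and then $q\,v(y)=(q-1)\,v(x)$ forces $v(y)=-(q-1)$. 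Alternatively, your own machinery closes the gap a posteriori: the coordinate-ring spanning argument and the distinct-valuation computation are valid for any $k$ and show the Weierstrass semigroup at $P_\infty$ is exactly $k\langle q-1,q\rangle$, which has infinitely many gaps when $k\ge2$, contradicting the fact that a Weierstrass semigroup has precisely $g$ gaps. With either patch inserted, your proof is complete.
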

\begin{proof}
Observe $(x)_{\infty} = qP_{\infty}$ and $(y)_{\infty} = (q-1)P_{\infty}$. Hence, $x^iy^j$ lies in $\mathcal{L}(G)$ whenever $iq + j(q-1)\le r$, and these monomials are linearly independent. By the structure of the Weierstrass semigroup at $P_\infty$, one obtains
\[
 \dim_{\Fq}\bigl(\mathcal{L}(rP_\infty)\bigr) \;=\; \#\bigl\{\,iq + j(q-1) \,\le\, r,\; i\ge 0,\;0\le j\le q-1 \bigr\}.
\]
This completes the proof.
\end{proof}

\begin{lemma}\label{2.12}
Let $C_r := C_{\mathcal{L}}(D,G)$ with $G=r(q^2 - q +1)P_\infty$. Then its dual code is given by
\[
   C_r^\perp \;=\; C_{\,q^3 + q^2 -3q - r}.
\]
In particular, $C_r$ is Hermitian self-orthogonal whenever $2r \leq q^3 + q^2 -3q$.
\end{lemma}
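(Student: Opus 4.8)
The plan is to obtain the dual directly from the differential (residue) duality for AG codes recalled above, $C_{\mathcal L}(D,G)^{\perp}=C_{\mathcal L}(D,\,D-G+K)$, and to choose the canonical divisor $K$ so that the right-hand side is again a one-point code at $P_\infty$. Writing $G=rP_\infty$ as in Lemma~\ref{2.11}, the target is a Weil differential $\eta$ with $K=(\eta)$ satisfying $D-G+(\eta)=(n+2g-2-r)P_\infty$; equivalently $(\eta)=(n+2g-2)P_\infty-D$, so that $\eta$ has a simple pole with residue $1$ at each of the $n$ points of $\mathcal D=\operatorname{supp}(D)$ and all of its remaining order concentrated as a zero at $P_\infty$. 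Granting such an $\eta$, substituting the two numerical inputs $n=\deg D=q^{3}$ and $2g-2=q^{2}-3q$ immediately gives $C_r^{\perp}=C_{\mathcal L}(D,(q^{3}+q^{2}-3q-r)P_\infty)=C_{\,q^{3}+q^{2}-3q-r}$, which is the asserted identity.

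The two numerical inputs are routine. The genus is $g=\tfrac{(q-1)(q-2)}{2}$ by Lemma~\ref{lemma:basicSeparated} with $m=q-1$, whence $2g-2=q^{2}-3q$, and $n=q^{3}$ is the cardinality of $\mathcal D$. For the shape $(dx)=(2g-2)P_\infty$ I would note that, because $A(Y)=Y^{q}+Y$ has $A'(Y)=1$, the degree-$q$ map $x\colon\mathcal X\to\mathbb P^{1}$ is unramified over the affine line and totally (wildly) ramified over $P_\infty$; hence the whole different, and with it all of $(dx)$, sits at $P_\infty$, forcing $(dx)=(2g-2)P_\infty$.

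The step I expect to be the main obstacle is producing $\eta$ itself, i.e. passing from $dx$ (regular on the affine part) to a differential in the same canonical class whose only affine poles are simple poles, with residue exactly $1$, at the points of $\mathcal D$. I would write $\eta=dx/w$ and reduce the problem to exhibiting a function $w\in\Fqsq(\mathcal X)$ with divisor $(w)=D-nP_\infty$, i.e. with simple zeros at precisely the $q^{3}$ points of $\mathcal D$ and a single pole of order $q^{3}$ at $P_\infty$; such a $w$ can be sought as a quotient of the function cutting out all of $\mathcal X(\Fqsq)$ by the function $t$ cutting out $\mathcal G$. Verifying that the zeros of $w$ are simple and supported exactly on $\mathcal D$, computing its pole order at $P_\infty$, and normalising the residues of $dx/w$ to $1$ (the residue theorem is consistent here because $n=q^{3}\equiv0$ in characteristic $p$) is where the separated-polynomial hypotheses and the $\Fqsq$-point structure must be used with care.

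Finally, the self-orthogonality claim I would read off the duality. Since $\operatorname{ev}$ is injective on $\mathcal L(sP_\infty)$ for $s<n$, the inclusion $C_r\subseteq C_r^{\perp}=C_{\,q^{3}+q^{2}-3q-r}$ is equivalent to $\mathcal L(rP_\infty)\subseteq\mathcal L((q^{3}+q^{2}-3q-r)P_\infty)$, i.e. to $r\le q^{3}+q^{2}-3q-r$, which is exactly the stated bound $2r\le q^{3}+q^{2}-3q$. To transfer this from the Euclidean to the Hermitian pairing I would use $C^{\perp H}=(C^{\perp})^{(q)}$ together with the facts that $\mathcal D$ is stable and $P_\infty$ is fixed under the $q$-power Frobenius; the point to watch, and a place where the Hermitian and Euclidean bounds can genuinely differ, is that the Frobenius pullback multiplies pole orders at $P_\infty$ by $q$, so the degree bookkeeping under the Hermitian form must be tracked explicitly rather than quoted from the Euclidean case.
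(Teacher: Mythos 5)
Your derivation of the dual identity is the same route the paper takes, not a different one. The paper's proof also rests on residue duality with a differential of exactly the shape you describe: it sets $t=\prod_{a\in\Fqsq}(x-a)=x^{q^2}-x$, takes $\eta=dt/t$, and records $(t)=D-q^3P_\infty$, $(dt)=(q^2-3q)P_\infty$, with equal residues at all points of $\mathrm{supp}(D)$. Since $d(x^{q^2}-x)=-dx$ in characteristic $p$, this $\eta$ is precisely your $-\,dx/w$ with $w=t$. So the function whose construction you flag as "the main obstacle" is simply $w=x^{q^2}-x$: its affine zeros are simple because $x$ is unramified over the affine line (your own observation that $A'(Y)=1$), its pole divisor is $q^2\,(x)_\infty=q^3P_\infty$, and the residues of $dx/w$ are all equal to $-1$ because $w'(x)=-1$; a common nonzero scalar does not affect the code, so no further normalisation is needed. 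Your numerical inputs ($2g-2=q^2-3q$ from $m=q-1$, and $n=q^3$) agree with the paper's. With that one explicit choice inserted, your argument for $C_r^\perp=C_{\,q^3+q^2-3q-r}$ is the paper's argument.

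The genuine gap is the "in particular" clause, and your closing caveat identifies it exactly --- but flagging it is not closing it, and you should know that the paper's own proof does not close it either (it merely restates the claim as an equivalence, with no argument). Concretely: $C_r\subseteq C_r^{\perp H}$ means $\sum_i f(P_i)\,g(P_i)^q=0$ for all $f,g\in\mathcal{L}(G)$. Writing $g(P_i)^q=(g^q)(P_i)$, these are the sums $\sum_i (fg^q)(P_i)$ with $fg^q\in\mathcal{L}\bigl((q+1)r\,P_\infty\bigr)$, not $\mathcal{L}(2r\,P_\infty)$ --- this is your "pole orders multiplied by $q$" worry made exact. Since $\sum_i h(P_i)=0$ for all $h\in\mathcal{L}(sP_\infty)$ whenever $\mathbf{1}=\mathrm{ev}(1)\in C_s^\perp=C_{\,q^3+q^2-3q-s}$, i.e.\ whenever $s\le q^3+q^2-3q$, what this method actually proves is Hermitian self-orthogonality under $(q+1)r\le q^3+q^2-3q$, a strictly stronger hypothesis than $2r\le q^3+q^2-3q$ for every $q\ge 2$ and $r>0$. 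Your Euclidean argument (inclusion of Riemann--Roch spaces) is fine, but it gives $C_r\subseteq C_r^{\perp}$, not $C_r\subseteq C_r^{\perp H}$, and there is no Galois invariance available to transport one into the other: $C_r^{(q)}$ is the image of $\{f^q:f\in\mathcal{L}(G)\}$ under evaluation, which coincides with $C_r$ only up to the Frobenius permutation of the $P_i$ and is a different subspace in general. So, as written, neither your proposal nor the paper proves the stated Hermitian claim; the bound the argument genuinely delivers is $(q+1)r\le q^3+q^2-3q$, and the downstream results (Theorem~\ref{thm:selfOrthogonal} and Theorem~\ref{qcode}) would need to be re-examined against that corrected condition.
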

\begin{proof}
A canonical divisor $W$ satisfies $W = (dt/t)$, where $t = \prod_{a\in\Fqsq}(x - a)$, implying $(t) = D - q^3 P_\infty$ and $(dt) = (q^2 -3q)P_\infty$. One computes that each point in $\mathrm{supp}(D)$ has residue $-1$. By standard arguments, 
\[
   C_r^\perp \;=\; C(D, D - G + W) \;=\; C\bigl(D, (q^3 + q^2 -3q-r) P_\infty\bigr).
\]
Hermitian self-orthogonality of $C_r$ follows from $C_r \subseteq C_r^{\perp H}$ if and only if $2r\le (q^3 + q^2 -3q)$.
\end{proof}

Let $T(r):= \#\{\,iq + j(q-1)\le r,\; i\ge 0,\; 0\le j\le q-1\,\}.$ 
Denoting $k_r := \dim_{\Fqsq}(C_r)$, one has:

\begin{proposition}
\label{prop:kr}
For the AG code $C_r$ with $r\ge 0$, the following hold:
\begin{enumerate}[(1)]
\item If $r < 0$, then $k_r = 0$.
\item If $0\le r \le q^2 -3q$, then $k_r = T(r)$.
\item If $q^2 -3q < r < q^3$, then $k_r = r(q^2 -q +1) - \frac{(q-1)(q-2)}{2}$.
\item If $q^3 \le r \le q^3 + q^2 - 3q$, then $k_r = q^3 - T(q^3 + q^2 - 3q - r)$.
\item If $r > q^3 - q^2 -3q$, then $k_r = q^3$.
\end{enumerate}
\end{proposition}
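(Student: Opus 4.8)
The plan is to read the dimension off the evaluation map
\[
   \mathrm{ev}\colon \mathcal{L}(G)\longrightarrow \Fqsq^{\,n},\qquad f\longmapsto \bigl(f(P_1),\dots,f(P_n)\bigr),
\]
whose image is $C_r$ and whose kernel is exactly $\mathcal{L}(G-D)$. This gives the basic identity
\[
   k_r \;=\; \ell(G) - \ell(G-D),
\]
and reduces everything to evaluating these two Riemann--Roch dimensions as $r$ varies. The four breakpoints in the statement are governed by the genus $g=\tfrac{(q-1)(q-2)}{2}$ from Lemma~\ref{lemma:basicSeparated}(iii), via the identities $2g-2=q^2-3q$, $\deg(D)=n=q^3$, and $n+2g-2=q^3+q^2-3q$.

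For parts (1)--(3) I would show the kernel term vanishes. When $r<0$ the divisor $G$ has negative degree, so $\ell(G)=0$ and $k_r=0$, giving (1); for $0\le r$ up through $r<q^3$ we have $\deg(G-D)=r-n<0$, hence $\ell(G-D)=0$ and $k_r=\ell(G)$. In the special range $0\le r\le 2g-2=q^2-3q$, Riemann--Roch alone does not determine $\ell(G)$, so I invoke Lemma~\ref{2.11}: its monomial basis counts precisely the non-gaps of the Weierstrass semigroup at $P_\infty$ that are $\le r$, giving $\ell(G)=T(r)$ and hence (2). Once $r$ passes $2g-2$ the divisor $G$ becomes non-special, Riemann--Roch gives $\ell(G)=\deg(G)-g+1$, and substituting produces the closed form in (3).

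Parts (4)--(5) are the regime $\deg(G-D)=r-n\ge 0$, where the kernel no longer vanishes. The subtlety is that $\deg(G-D)=r-n$ falls back into the special range $[0,2g-2]$, so a bare Riemann--Roch estimate for $\ell(G-D)$ is unavailable. I would bypass this with duality: Lemma~\ref{2.12} gives $C_r^{\perp}=C_{\,q^3+q^2-3q-r}$, whence $k_r=n-k_{\,q^3+q^2-3q-r}$. For $q^3\le r\le q^3+q^2-3q$ the dual index $q^3+q^2-3q-r$ lies in $[0,2g-2]$, so part (2) applies to it and yields $k_r=q^3-T(q^3+q^2-3q-r)$, i.e.\ (4); once $r$ exceeds $q^3+q^2-3q$ the dual index is negative, so part (1) forces $k_{\,q^3+q^2-3q-r}=0$ and $k_r=q^3$, giving (5).

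I expect the main obstacle to be exactly the middle window of part (4): there $G-D$ is effective of degree at most $2g-2$, so its index of speciality is genuinely nonzero and invisible to the degree alone, and it is the passage through the dual in Lemma~\ref{2.12} that replaces this unknown by the already-computed semigroup count $T(\cdot)$. The remaining work is bookkeeping: verifying that the formulas agree at the seams $r=q^2-3q$, $r=q^3$, and $r=q^3+q^2-3q$, which reduces to the fact that the semigroup generated by $q-1$ and $q$ is symmetric with Frobenius number $2g-1$, so that $T(2g-2)=g$ and the $T$-count and the Riemann--Roch count match across the threshold.
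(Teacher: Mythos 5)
Your proposal is correct and takes essentially the same route as the paper: the paper's own proof is only a one-line appeal to ``the dimension formulae of one-point AG codes via Riemann--Roch and the dual relationships given by Lemma~\ref{2.12}'' (with a pointer to \cite{ttt,stich}), and your argument fills in precisely those steps --- the evaluation-map identity $k_r=\ell(G)-\ell(G-D)$, the Weierstrass-semigroup count of Lemma~\ref{2.11} in the special range $0\le r\le 2g-2$, plain Riemann--Roch once $\deg G>2g-2$, and the duality $C_r^{\perp}=C_{q^3+q^2-3q-r}$ of Lemma~\ref{2.12} to dispose of parts (4)--(5), including the silent (and correct) reading of the bound in (5) as $r>q^3+q^2-3q$. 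One caveat: in part (3) Riemann--Roch yields $k_r=\deg(G)-g+1$, so your substitution actually produces $r-g+1$ (under the reading $\deg G=r$ forced by the ranges and by Lemma~\ref{2.11}) rather than the printed expression $r(q^2-q+1)-\tfrac{(q-1)(q-2)}{2}$; that mismatch is an inconsistency in the paper's statement (a spurious factor and a dropped $+1$), not a defect of your argument.
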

\begin{proof}
These results follow from the dimension formulae of one-point AG codes via Riemann--Roch and from the dual relationships given by Lemma~\ref{2.12}. For detailed steps, see also~\cite{ttt,stich}.
\end{proof}

\begin{proposition}
\label{prop:monomialEquiv}
The code $C_r$ is monomially equivalent to the one-point code $C\bigl(D, r(q^2-q+1)P_\infty\bigr)$.
\end{proposition}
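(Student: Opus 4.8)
The plan is to realize the monomial equivalence through an isomorphism of Riemann--Roch spaces induced by multiplication by a single rational function, and then to read off the resulting coordinate transformation on the evaluation code. The guiding principle is standard: if $G_1$ and $G_2$ are divisors whose supports are both disjoint from $\mathrm{supp}(D)$, and if $G_1 \sim G_2$ via a function $h$ with $\mathrm{div}(h) = G_1 - G_2$, then $f \mapsto hf$ is an $\Fqsq$-linear isomorphism $\mathcal{L}(G_1) \xrightarrow{\sim} \mathcal{L}(G_2)$; since $h$ has neither zeros nor poles on $\mathrm{supp}(D)$, evaluation at $D$ intertwines the two codes by the diagonal matrix $\mathrm{diag}\bigl(h(P_1),\dots,h(P_n)\bigr)$ with nonzero entries, which is precisely a monomial (indeed diagonal) equivalence.

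First I would record the two divisors in play: $G = \sum_{P\in\mathcal{G}} rP$ defining $C_r$, and $G' = r(q^2-q+1)P_\infty$ defining the one-point code. By construction $\deg G = r\lvert\mathcal{G}\rvert = r(q^2-q+1) = \deg G'$, so the degrees already agree, which is the necessary first check. Next I would produce the connecting function. The natural candidate is the $\Fqsq$-rational function $t$ introduced in the setup, whose zero locus is exactly $\mathcal{G}$. Setting $h := t^{r}$, it then suffices to establish the divisor identity
\[
  \mathrm{div}(t) \;=\; \sum_{P\in\mathcal{G}} P \;-\; (q^2-q+1)\,P_\infty,
\]
since this gives $\mathrm{div}(h) = r\,\mathrm{div}(t) = G - G'$ and hence the linear equivalence $G \sim G'$.

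The heart of the argument, and the step I expect to be the main obstacle, is verifying this divisor identity for $t$. Two points must be settled: (a) that $t$ vanishes to order exactly one at each point of $\mathcal{G}$, so that the zeros are simple; and (b) that the only pole of $t$ is at $P_\infty$, of order exactly $\lvert\mathcal{G}\rvert = q^2-q+1$. For (b) I would compute $v_{P_\infty}(t)$ from the known pole orders $(x)_\infty = qP_\infty$ and $(y)_\infty = (q-1)P_\infty$ together with the expression of $t$ in the monomial basis governed by the Weierstrass semigroup at $P_\infty$ from Lemma~\ref{2.11}, then pin down the pole order using the principal-divisor relation $\deg(\mathrm{div}(t))=0$ against the count of affine zeros. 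For (a), simplicity of the zeros should follow from the non-singularity of $\mathcal{X}$ (Lemma~\ref{lemma:basicSeparated}(ii)) via a local uniformizer computation at each $P\in\mathcal{G}$.

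Finally I would assemble the pieces. With the divisor identity in hand, $f \mapsto t^{r} f$ is an isomorphism $\mathcal{L}(G) \to \mathcal{L}(G')$; and since $\mathrm{supp}(\mathrm{div}(t)) = \mathcal{G}\cup\{P_\infty\}$ is disjoint from $\mathrm{supp}(D) = \mathcal{D}$, each scalar $t(P_i)^{r}$ is finite and nonzero. Hence every codeword $(f(P_1),\dots,f(P_n))\in C_r$ is carried to $\bigl(t(P_1)^r f(P_1),\dots,t(P_n)^r f(P_n)\bigr)$, and as $f$ ranges over $\mathcal{L}(G)$ these exhaust $C(D,G')$. This displays $C_r$ and $C\bigl(D, r(q^2-q+1)P_\infty\bigr)$ as images of one another under the fixed diagonal transformation $\mathrm{diag}\bigl(t(P_1)^r,\dots,t(P_n)^r\bigr)$, which is the asserted monomial equivalence.
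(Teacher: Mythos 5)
Your proposal is correct and takes essentially the same route as the paper: the paper's proof likewise sets $G' := r(q^2-q+1)P_\infty$, observes that $G - G' = (t^r)$ for the $\Fqsq$-rational function $t$ cutting out $\mathcal{G}$, and concludes that the two codes differ only by the multiplier $t^r$ on the function space, hence are monomially equivalent. Your write-up simply makes explicit the details the paper leaves implicit, namely the divisor identity $\mathrm{div}(t)=\sum_{P\in\mathcal{G}}P-(q^2-q+1)P_\infty$, the disjointness of $\mathrm{supp}(\mathrm{div}(t))$ from $\mathrm{supp}(D)$, and the resulting nonzero diagonal scalars $t(P_i)^r$.
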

\begin{proof}
Set $G' := r(q^2-q+1)P_\infty$. Observe that $G - G' = (t^r)$ for some $\Fqsq$-rational function $t$. This implies $C_{\mathcal{L}}(D,G)$ and $C_{\mathcal{L}}(D,G')$ differ only by the multiplier $t^r$ on the function space, so they are monomially equivalent (see also the argument in~\cite[Prop.~3.2]{mariaree}).
\end{proof}

\begin{theorem}\label{thm:selfOrthogonal}
For $r \leq q^2 + q - 3$, the code $C_r$ is Hermitian self-orthogonal.
\end{theorem}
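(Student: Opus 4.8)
The plan is to deduce the theorem from the dual description already established in Lemma~\ref{2.12}, so that the entire statement collapses to an elementary inequality in $q$. First I would invoke the self-orthogonality criterion contained there: $C_r \subseteq C_r^{\perp H}$ holds as soon as $2r \le q^3 + q^2 - 3q$. Hence it suffices to show that the hypothesis $r \le q^2 + q - 3$ already forces this bound. Bounding $2r \le 2(q^2+q-3)$, the task reduces to the single numerical inequality $2(q^2+q-3) \le q^3+q^2-3q$.

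Next I would rearrange this as $q^3 - q^2 - 5q + 6 \ge 0$ and factor the cubic as $(q-2)(q^2+q-3)$. For every prime power $q \ge 2$ both factors are nonnegative — indeed $q^2+q-3 \ge 3$ and $q-2 \ge 0$ — so the inequality holds, with equality precisely at $q=2$ and strict inequality for all $q \ge 3$ (the range relevant to the examples). Substituting back through Lemma~\ref{2.12} gives $2r \le q^3+q^2-3q$ and therefore $C_r \subseteq C_r^{\perp H}$ throughout $0 \le r \le q^2+q-3$, which is exactly the claim.

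The genuine weight of the argument does not lie in this inequality but in the duality that I am permitting myself to assume, namely the computation carried out inside Lemma~\ref{2.12}: the identification of a canonical divisor $W = (q^3+q^2-3q)P_\infty - D$ from the data $(t) = D - q^3 P_\infty$ and $(dt) = (q^2-3q)P_\infty$, together with the verification that every residue along $\mathrm{supp}(D)$ equals $-1$. It is precisely this that makes the dual of $C_r$ a one-point code $C_{q^3+q^2-3q-r}$ and turns self-orthogonality into the nesting recorded in the ``in particular'' clause of Lemma~\ref{2.12}.

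Granting that computation, the present theorem is exactly the factorization above, and the only point worth emphasizing is structural rather than computational: the stated range $r \le q^2+q-3$ is a genuine sub-range of what Lemma~\ref{2.12} allows, strict for $q \ge 3$, so there is no tightness to establish and no residual case analysis to perform. I therefore expect the proof to be short once Lemma~\ref{2.12} is in hand; the main obstacle has effectively been discharged upstream, in the canonical-divisor and residue bookkeeping of that lemma, and what remains here is the verification $(q-2)(q^2+q-3)\ge 0$.
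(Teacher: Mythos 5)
Your proposal is correct and follows exactly the paper's route: cite the self-orthogonality criterion $2r \le q^3+q^2-3q$ from Lemma~\ref{2.12} and reduce the theorem to the numerical inequality $2(q^2+q-3) \le q^3+q^2-3q$ for $q \ge 2$. The only difference is cosmetic — you verify that inequality via the explicit factorization $q^3-q^2-5q+6 = (q-2)(q^2+q-3) \ge 0$, while the paper simply asserts it holds for $q \ge 2$.
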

\begin{proof}
By Lemma~\ref{2.12}, $C_r$ is self-orthogonal if $2r \leq q^3 + q^2 - 3q$. Since $r \le q^2 + q -3$ yields 
\[
    2r \;\le\; 2q^2 + 2q -6 \;\le\; q^3 + q^2 -3q \quad (\text{for } q\ge2),
\]
the code $C_r$ is indeed Hermitian self-orthogonal in these ranges of $r$.
\end{proof}

\section{Quantum Stabilizer Codes}\label{sec:quantum}
We now show how to obtain quantum stabilizer codes from the family of Hermitian self-orthogonal classical codes $C_r$.

\begin{lemma}[{\!\cite{ashi}}]\label{4.1}
Suppose there is an $[n, k]$ linear code over $\Fq$ that is Hermitian self-orthogonal with dual distance $d^\perp$. Then there exists a $q$-ary quantum stabilizer code with parameters $[[n, n-2k, d^\perp]]$.
\end{lemma}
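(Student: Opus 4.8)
The result is the standard Hermitian construction of stabilizer codes (Ashikhmin--Knill; Ketkar et al.), so the plan is to route the $q$-ary stabilizer formalism through the symplectic geometry of $\Fq^{2n}$ and then transport the hypothesis into that picture. I read the hypothesis as a code $C$ over $\Fqsq$ rather than $\Fq$: the form $\langle a,b\rangle_H=\sum_i a_i b_i^{q}$ is nondegenerate only over $\Fqsq$ (over $\Fq$ it collapses to the Euclidean product), and the codes $C_r$ to which we apply the lemma live over $\Fqsq$.

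First I would recall the dictionary between stabilizer codes and classical codes. Generalized Pauli operators on $(\mathbb{C}^q)^{\otimes n}$ are indexed by pairs $(a\mid b)\in\Fq^{n}\times\Fq^{n}$, and two of them commute precisely when $\mathrm{Tr}_{q/p}\bigl(a\cdot b'-a'\cdot b\bigr)=0$; for an $\Fq$-linear set of labels this is equivalent to vanishing of the $\Fq$-valued symplectic form $a\cdot b'-a'\cdot b$. An abelian error subgroup without nontrivial scalars thus corresponds to an additive code $S\subseteq\Fq^{2n}$ that is self-orthogonal for this symplectic form; writing $|S|=q^{\,n-K}$, the associated stabilizer code encodes $K$ qudits, and its minimum distance is the least symplectic weight occurring in $S^{\perp_s}\setminus S$ (equal to the minimum weight of $S^{\perp_s}$ when $S\subsetneq S^{\perp_s}$). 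I would take this as the cited definition of the parameters $[[n,K,d]]_q$.

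The heart of the argument links the Hermitian form over $\Fqsq$ to the symplectic form over $\Fq$. I would first note that $C\subseteq C^{\perp_H}$ forces $\mathrm{Tr}_{q^2/q}\bigl(\langle u,v\rangle_H\bigr)=0$ for all $u,v\in C$, so $C$ is self-orthogonal for the $\Fq$-valued trace-Hermitian form. Then I would fix an $\Fq$-basis of $\Fqsq$ and use it to build an $\Fq$-linear isomorphism $\psi\colon\Fqsq^{n}\to\Fq^{2n}$ that is an isometry carrying the trace-Hermitian form to the symplectic form and, crucially, carrying Hermitian duals to symplectic duals, so that $\psi(C^{\perp_H})=\psi(C)^{\perp_s}$. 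Setting $S:=\psi(C)$ produces a symplectic self-orthogonal additive code; since $C$ is $\Fqsq$-linear of dimension $k$, its $\Fq$-dimension is $2k$, whence $|S|=q^{2k}$, and matching $|S|=q^{\,n-K}$ gives $K=n-2k$. Because $\psi$ carries the $\Fqsq$-Hamming weight of a codeword to the symplectic weight of its image and sends $C^{\perp_H}$ onto $S^{\perp_s}$, the quantum minimum distance equals the least weight in $C^{\perp_H}\setminus C$, namely $d^\perp$, yielding $[[n,\,n-2k,\,d^\perp]]_q$.

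The main obstacle is the isometry step. One must choose the basis of $\Fqsq$ (equivalently, a pair of symplectic-partner coordinate maps) so that $\mathrm{Tr}_{q^2/q}\langle u,v\rangle_H$ reproduces exactly the symplectic pairing of $\psi(u),\psi(v)$, and then promote the easy inclusion $\psi(C^{\perp_H})\subseteq\psi(C)^{\perp_s}$ to an equality via a nondegeneracy and dimension count. A secondary subtlety is the pure-versus-impure distinction in the distance: in general the construction only guarantees $d\ge d\bigl(C^{\perp_H}\bigr)$, with the clean equality $d=d^\perp$ holding under the pure-code reading of~\cite{ashi}, so I would state the distance accordingly.
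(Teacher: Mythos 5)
The paper itself gives no proof of this lemma --- it is imported verbatim from \cite{ashi} (see also \cite{ket}) --- so the only meaningful comparison is with the standard literature argument, and your outline does follow that route: you correctly re-read the hypothesis as an $\Fqsq$-linear code (repairing the paper's sloppy ``over $\Fq$'' phrasing, under which $\langle a,b\rangle_H$ degenerates to the Euclidean product), you use the right stabilizer-to-additive-code dictionary, the count $|S|=q^{2k}=q^{\,n-K}$ giving $K=n-2k$, the weight preservation under the coordinate map, and the purity caveat on the distance.

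However, the step you yourself identify as the crux is specified in a way that cannot be carried out. You ask for a basis of $\Fqsq$ and a map $\psi$ such that $\mathrm{Tr}_{q^2/q}\langle u,v\rangle_H$ \emph{equals} the symplectic pairing of $\psi(u),\psi(v)$. The left-hand form is symmetric and non-alternating: for $u=v$ one has $\langle u,u\rangle_H=\sum_i u_i^{q+1}\in\Fq$, so $\mathrm{Tr}_{q^2/q}\langle u,u\rangle_H=2\sum_i u_i^{q+1}$, which is nonzero for $u=(1,0,\dots,0)$ whenever $q$ is odd; any symplectic form vanishes on the diagonal. Hence no such $\psi$ exists for odd $q$ --- in particular for $q=3$, the very case the paper applies the lemma to. The standard repair (this is how \cite{ket} proves it) is to replace the trace-Hermitian form by the \emph{trace-alternating} form built from the antisymmetrized expression $u\cdot v^q-u^q\cdot v=\langle u,v\rangle_H-\langle u,v\rangle_H^{\,q}$, normalized by a constant $c$ with $c^q=-c$ (e.g.\ $c=\beta^{2q}-\beta^2$ for a normal basis $\{\beta,\beta^q\}$) so that it is $\Fq$-valued; the map $(a\mid b)\mapsto\beta a+\beta^q b$ is then a genuine isometry between the trace-symplectic and trace-alternating forms. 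Because this form depends only on the value $x=\langle u,v\rangle_H$, Hermitian orthogonality immediately gives alternating orthogonality, which yields both $\psi(C)\subseteq\psi(C)^{\perp_s}$ and $\psi(C^{\perp_H})\subseteq\psi(C)^{\perp_s}$; the reverse inclusion follows from the $\Fqsq$-linearity of $C$ (replace $u$ by $\alpha u$, $\alpha\in\Fqsq$, and use nondegeneracy to force $x=0$), or equivalently from your dimension count $\dim_{\Fq}C^{\perp_H}=2(n-k)=\dim_{\Fq}\psi(C)^{\perp_s}$. With that single substitution the rest of your outline goes through and is exactly the cited proof. One further slip worth flagging: the quantum distance equals $\min\mathrm{wt}\bigl(S^{\perp_s}\setminus\{0\}\bigr)$ only for \emph{pure} codes, not automatically whenever $S\subsetneq S^{\perp_s}$ as your parenthetical asserts --- though your closing sentence states the correct dichotomy, and indeed the constructed code always has distance at least $d^\perp$, which is the reading under which the lemma's parameter claim is valid.
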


Combining Lemma~\ref{4.1} with Theorem~\ref{thm:selfOrthogonal}, we derive:

\begin{theorem}\label{qcode}
Let $\mathcal{X}$ be the curve $y^q + y = x^m$ with $q-m=1$. For any integer $r$ such that $q^2-2 \,\le r \le q^2 + q -3$, $C_r$ is Hermitian self-orthogonal. Consequently, there exists a $q$-ary quantum stabilizer code with parameters
\[
   \Bigl[\!\Bigl[\,q^3,\; q^3 + q^2 - 3q \;-\;2r,\;\;r + 2q - q^2\Bigr]\!\Bigr].
\]
\end{theorem}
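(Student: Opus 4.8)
The plan is to treat the Hermitian self-orthogonality and the three quantum parameters separately, then invoke Lemma~\ref{4.1}. The self-orthogonality is essentially already in hand: the hypothesis $q^2-2\le r\le q^2+q-3$ in particular gives $r\le q^2+q-3$, so Theorem~\ref{thm:selfOrthogonal} applies verbatim and $C_r\subseteq C_r^{\perp H}$. It therefore remains only to identify the triple $[[\,n,\;n-2k_r,\;d^\perp\,]]$ produced by Lemma~\ref{4.1}, namely the length $n$, the dimension $k_r=\dim_{\Fqsq}C_r$, and the dual distance $d^\perp=d\bigl(C_r^{\perp H}\bigr)$.

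First I would fix the length and dimension. The length is $n=q^3$ by construction of $D$. For the dimension, the key observation is that the hypothesis places $r$ in regime (3) of Proposition~\ref{prop:kr}: with genus $g=\tfrac{(q-1)(q-2)}{2}$ one has $2g-2=q^2-3q$, and the inequalities $q^2-3q<q^2-2\le r\le q^2+q-3<q^3$ hold for all $q\ge 2$, so $\deg G$ exceeds $2g-2$ and Riemann--Roch gives $k_r=r-g+1=r-\tfrac{(q-1)(q-2)}{2}+1$. Substituting into Lemma~\ref{4.1} yields
\[
  n-2k_r \;=\; q^3-2r+(2g-2)\;=\;q^3+q^2-3q-2r,
\]
which is exactly the middle entry of the claimed parameters. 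This step is a routine inequality check followed by the one-point Riemann--Roch count, so I expect no difficulty here.

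For the distance I would first reduce the Hermitian dual to the Euclidean one. Raising the defining relation $\sum v_i c_i^{\,q}=0$ to the $q$-th power and using $c_i^{\,q^2}=c_i$ shows that the coordinatewise Frobenius $v\mapsto(v_1^{\,q},\dots,v_n^{\,q})$ is a weight-preserving bijection carrying $C_r^{\perp H}$ onto the Euclidean dual $C_r^{\perp}$; hence $d\bigl(C_r^{\perp H}\bigr)=d\bigl(C_r^{\perp}\bigr)$. Lemma~\ref{2.12} then identifies $C_r^{\perp}=C_{\,q^3+q^2-3q-r}$, a one-point code whose defining divisor has degree $s:=q^3+q^2-3q-r$, and I would bound its minimum distance by the Goppa designed distance. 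Feeding $n=q^3$, $k_r$, and the resulting $d^\perp$ into Lemma~\ref{4.1} assembles the triple $\bigl[\!\bigl[\,q^3,\;q^3+q^2-3q-2r,\;r+2q-q^2\,\bigr]\!\bigr]$.

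I expect the distance to be the main obstacle. Establishing precisely $d^\perp=r+2q-q^2$ — rather than a value differing by a small correction — requires careful accounting of the canonical divisor degree $2g-2=q^2-3q$ and of the local behaviour of the evaluation functions at $P_\infty$, since $s$ sits just below $n=q^3$ and the designed-distance estimate for $C_s$ is delicate in that near-full-degree regime. Verifying that the Goppa bound is the correct value for these separated-polynomial curves, and that it is attained on the boundary range $q^2-2\le r\le q^2+q-3$, is the step I would scrutinize most closely, as it is what ultimately pins down the minimum distance of the quantum code.
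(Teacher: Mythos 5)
Your proposal is correct and takes essentially the same route as the paper's (much terser) proof: Theorem~\ref{thm:selfOrthogonal} for Hermitian self-orthogonality, Lemma~\ref{4.1} for the quantum conversion, the Riemann--Roch count $k_r = r-g+1$ (which is the reading of Proposition~\ref{prop:kr} consistent with the stated parameters) to get the middle entry $q^3+q^2-3q-2r$, and Lemma~\ref{2.12} plus the Goppa designed distance for the dual distance --- steps the paper merely asserts with ``substituting $n=q^3$ and $k_r$'' and ``noting that $d^\perp \ge r+2q-q^2$.'' The one step you flag as the main obstacle actually dissolves on inspection: the theorem, like the paper's own proof, needs only a \emph{lower} bound on $d^\perp$, and your Frobenius-plus-designed-distance argument already delivers $d\bigl(C_r^{\perp H}\bigr) = d\bigl(C_s\bigr) \ge q^3 - s = r+3q-q^2 \ge r+2q-q^2$ where $s = q^3+q^2-3q-r$, so no exact determination of the dual distance (nor any delicate near-full-degree analysis) is required.
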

\begin{proof}
Since $C_r$ is Hermitian self-orthogonal (Theorem~\ref{thm:selfOrthogonal}), its Hermitian dual $C_r^{\perp H}$ has dimension at least $n - k_r$. By Lemma~\ref{4.1}, we get a $q$-ary stabilizer code $[[n,n-2k_r,d^\perp]]$. Substituting $n=q^3$ and $k_r$, and noting that the dual distance $d^\perp \ge r + 2q - q^2$ in those ranges, we complete the proof.
\end{proof}

\begin{example}
Take $q=3$. For $7 \le r \le 9$, Theorem~\ref{qcode} yields $3$-ary quantum codes 
\[
   [[27,\;27-2r,\; r-3]]_3.
\]
In particular:
\begin{itemize}
\item $r=7 \implies [[27,13,4]]_3$,
\item $r=8 \implies [[27,11,5]]_3$,
\item $r=9 \implies [[27,9,6]]_3$.
\end{itemize}
These codes compare favorably with known examples of similar lengths and dimensions (see, e.g., \cite{online}).
\end{example}

\begin{example}
Take $q=5$. For $18 \le r \le 22$, Theorem~\ref{qcode} yields $5$-ary quantum codes:
\[
   [[125,\;135-2r,\;r-15]]_5.
\]
Hence:
\begin{itemize}
\item $r=18 \implies [[125,99,3]]_5$,
\item $r=19 \implies [[125,97,4]]_5$,
\item $r=20 \implies [[125,95,5]]_5$,
\item $r=21 \implies [[125,93,6]]_5$,
\item $r=22 \implies [[125,91,7]]_5$.
\end{itemize}
Again, these provide good parameters. For instance, $[[125,93,12]]_5$ appears in tables of best-known codes \cite{online}. 
\end{example}

\section{Reinforcement Learning-Enhanced Greedy Decoder for Stabilizer Codes over $F_q$} Greedy decoders offer a fast, heuristic approach to error correction by iteratively correcting syndrome bits using local operations. Such decoders have been employed in classical coding (e.g., the Sipser–Spielman bit-flip algorithm) and even in quantum error correction (e.g., a greedy strategy efficiently decodes $X$ errors in the fiber-bundle code). However, greedy techniques alone are suboptimal and can get trapped by complex error patterns. In this work, we introduce an \emph{RL-on-Greedy} decoder that augments a classical greedy decoder with a reinforcement learning (RL) agent. The method applies a greedy decoder first, then uses a trained RL agent (e.g., a Deep Q-Network) to resolve any residual syndrome that the greedy stage fails to correct. This two-stage decoder is generic and can be applied to arbitrary quantum stabilizer codes (including nonbinary codes over $F_q$ \cite{ket}). Recent studies have demonstrated that RL-based decoders can surpass traditional decoding algorithms \cite{fit}
achieving higher success rates and even improving error thresholds in some cases \cite{fit}. Our RL-on-Greedy scheme leverages these benefits while preserving the speed of greedy decoding.

\subsection{Syndrome and Error Model:} Consider an $[[n,k,d]]q$ stabilizer code over $F_q$ \cite{ket}  with $n$ physical $q$-ary qudits and $n-k$ stabilizer generators. Let $\mathbf{s}\in F_q^{,n-k}$ denote the syndrome vector obtained by measuring all stabilizers, where each component $s_i$ is the outcome (in $\mathbb{F}q$) of the $i$th stabilizer measurement. An error can be represented by an operator $E$ acting on some subset of qudits; in the vector representation, we denote by $\mathbf{e}$ the error pattern (e.g., $e_j \in F_q$ indicates the magnitude of an $X$-type flip on qudit $j$, with $e_j=0$ meaning no error on that qudit). The syndrome is related to the error by a linear relation $\mathbf{s} = H,\mathbf{e}^T$, where $H$ is the parity-check matrix of the code (each row corresponds to a stabilizer generator) over $F_q$. In other words, $s_i = \sum{j} H{ij} e_j$ (mod $q$), capturing the fact that an error on qudit $j$ triggers the $i$th syndrome if qudit $j$ is involved in that stabilizer. We assume a stochastic error model where each physical qudit suffers an error independently. For example, a depolarizing noise on qudits can be modeled such that each qudit is hit by a non-identity Pauli with probability $p$ (with errors distributed uniformly over the $q^2-1$ Pauli errors on that qudit). For simplicity, one can consider separate $X$-type and $Z$-type errors; in a symmetric depolarizing model on qutrits ($q=3$), each qudit has probability $p$ of an $X^{a}Z^{b}$ error (with $a,b\in{0,1,2}$ not both zero) and $1-p$ of no error.

\subsection{RL-on-Greedy Decoding Procedure:} 
Algorithm \ref{alg:rl-greedy} presents the decoding procedure. The input is a raw syndrome $\mathbf{s}$ from the quantum code (assumed to arise from some physical error $\mathbf{e}$). First, a classical \textsc{GreedyDecode} is applied to $\mathbf{s}$. In this greedy decoder, at each step one chooses a corrective operation that locally reduces the syndrome weight. For instance, one rule is to flip a qudit involved in the largest number of unsatisfied stabilizers, updating the syndrome accordingly. The greedy stage halts when no single-qudit flip immediately improves the syndrome or when the syndrome becomes zero. Let $\mathbf{s}{\text{res}}$ be the residual syndrome after the greedy decoder. If $\mathbf{s}{\text{res}}=\mathbf{0}$, the decoder succeeds with the greedy correction alone. Otherwise, the residual syndrome $\mathbf{s}_{\text{res}}$ is handed to the RL agent for further decoding.

\begin{algorithm}[h]
\caption{RL-on-Greedy Decoder}
\label{alg:rl-greedy}
\begin{algorithmic}[1]
\Require Syndrome $\mathbf{s}$ for an $[[n,k,d]]q$ code, Greedy decoder $\mathcal{G}$, trained RL agent policy $\pi^*$
\Ensure Recovery operation $C$ that corrects $\mathbf{s}$
\State $C \gets \emptyset$ \Comment{Initialize empty correction}
\State $\mathbf{s}{\text{res}} \gets \mathcal{G}(\mathbf{s})$ \Comment{Apply Greedy decoder to get residual syndrome}
\State $C \gets C \cup$ (corrections applied by $\mathcal{G}$)
\If{$\mathbf{s}{\text{res}} = \mathbf{0}$}
\State \textbf{return} $C$ \Comment{Syndrome fully corrected by greedy stage}
\EndIf
\State $state \gets \mathbf{s}{\text{res}}$
\While{$state \neq \mathbf{0}$ \textbf{and} not exceeded max steps}
\State $a \gets \pi^*(state)$ \Comment{RL agent chooses an action (single-qudit flip)}
\State \textit{Apply} action $a$ to the code (flip the corresponding qudit)
\State $C \gets C \cup {a}$ \Comment{Accumulate the correction operation}
\State $state \gets$ new syndrome after applying $a$ \label{alg:synd-update}
\EndWhile
\State \textbf{return} $C$
\end{algorithmic}
\end{algorithm}

In words, the agent is only rewarded when it achieves the terminal goal of error correction (transition into the zero syndrome state). This sparse reward setup aligns with the idea that the agent “wins” by correcting all errors \cite{swe}. (In practice, one may also assign a small negative reward for each step to discourage overly long correction sequences, and a large negative reward if the agent exceeds the step limit or causes a logical failure.)

\subsection{Support for Nonbinary Codes:}
A key feature of our RL-on-Greedy decoder is that it generalizes to stabilizer codes over arbitrary finite fields $F_q$. The syndrome, state, and actions are all defined in the $q$-ary domain. In particular, syndrome values $s_i \in F_q$ may take values ${0,1,\dots,q-1}$ indicating the eigenvalue of each stabilizer generator. The greedy decoder can handle these multi-valued syndromes by choosing corrections that reduce the magnitude of syndrome components (modulo $q$). Likewise, the RL agent’s action space includes $q$-ary Pauli operations. For example, for a qudit code over $F_3$ (qutrits), the action set can include an $X$ flip (adding $+1$ mod 3 to a qudit’s $X$ error) and an $X^2$ flip (adding $-1$ mod 3) on each qudit (similarly for $Z$ flips). The state transition (Eq.,\ref{eq:syndrome-update}) inherently works mod $q$, so the RL agent learns directly on the $F_q$ syndrome space. This framework extends the benefits of machine-learned decoding to codes beyond qubits. Notably, our approach does not require any particular structure of the code – the RL agent is trained using only the syndrome update rules, which can be obtained from the stabilizer parity-check matrix of any $F_q$ code.

\subsection{Example: [[27,13,4]]$_3$ Qutrit Code – Performance Results:}
 To demonstrate the effectiveness of the RL-on-Greedy decoder, we evaluated it on a nonbinary stabilizer code with parameters $[[27,13,4]]_3$. This is a quantum code on $n=27$ qutrits (physical three-level systems) that encodes $k=13$ logical qutrits, with distance $d=4$. The distance $d=4$ implies the code can correct all single-qudit errors but not all pairs of errors (two errors can sometimes mimic a logical error) \cite{ket}. We consider a depolarizing noise model on qutrits: each qudit experiences an $X$, $Z$, or $X!Z$ error with total probability $p$ (each error type occurring with probability $p/3$ for symmetry), and no error with probability $1-p$. We compare two decoding strategies: a pure greedy decoder versus our RL-on-Greedy decoder (greedy + RL agent). Figure \ref{fig:rl_vs_greedy} plots the logical failure rate (block error probability) as a function of physical error rate $p$ for both decoders. As shown, the RL-enhanced decoder significantly outperforms the greedy approach across a range of $p$. For small $p$, both decoders correct most errors; but at moderate $p$ (on the order of a few percent), the greedy decoder’s failure rate rises sharply due to uncorrected two-qudit errors. The RL-on-Greedy decoder is able to resolve many of those cases, resulting in a markedly lower failure rate. For example, at $p=0.05$ (5$\%$ physical error rate), the greedy decoder fails on roughly a few $\times 10^{-2}$ fraction of trials, whereas the RL-on-Greedy decoder’s failure rate is about half of that. Even near the code’s break-even point, the RL agent recovers a significant portion of error patterns that confound the greedy approach. Importantly, RL-on-Greedy approaches the performance of an optimal decoder for this code: it corrects essentially all error patterns of weight $\le 2$ that are correctable in principle (given $d=4$), whereas the greedy decoder alone misses many of these. The result is an extended effective error-correction range and a higher effective noise threshold. We also observed that the RL agent typically requires only a small number of actions (often just 1 or 2 steps) to finish correcting the syndrome after the greedy stage, so the additional latency is minimal.

\begin{figure}[t]
\centering
\includegraphics[width=0.65\textwidth]{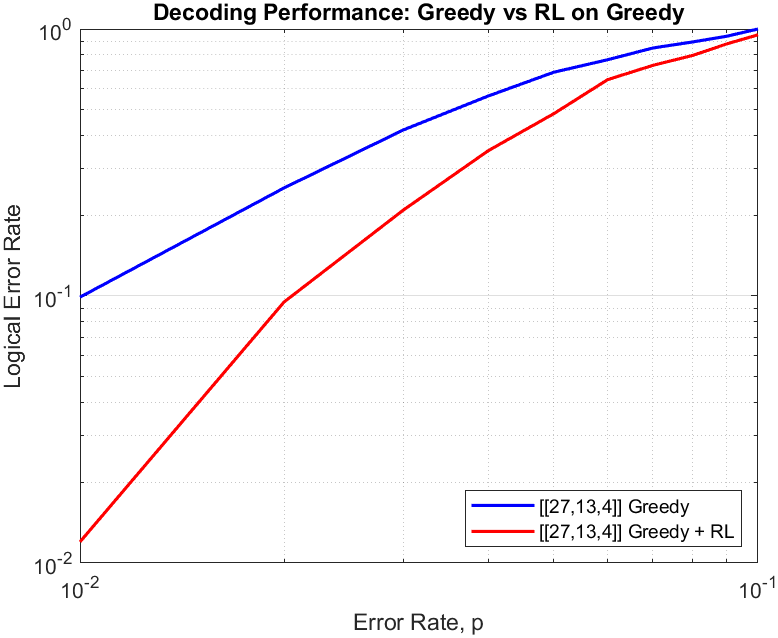}
\caption{Decoding failure rate vs. physical error probability for the $[[27,13,4]]_3$ qutrit code, comparing the classical Greedy decoder and the proposed RL-on-Greedy decoder. The RL-enhanced decoder achieves a lower failure rate at all depicted error probabilities. In particular, it corrects many two-error patterns that the Greedy decoder fails to correct, resulting in improved performance (lower logical error probability).}
\label{fig:rl_vs_greedy}
\end{figure}

\section{Discussion: }
These results illustrate that combining fast greedy decoding with a learned RL agent yields a powerful decoder that is both efficient and effective. The greedy first stage quickly fixes straightforward errors, dramatically reducing the syndrome weight, while the RL agent specializes in handling the “hard” residual configurations that require sequential or non-local tactics. The RL-on-Greedy decoder leverages the strengths of both approaches: it retains the low computational cost of greedy decoding and augments it with the adaptive, near-optimal decision-making of reinforcement learning. Notably, the RL agent is trained on the specific code and noise model, allowing it to discover decoding strategies tailored to that code’s structure (including any degeneracies or correlated syndrome patterns). Prior works have shown that RL decoders can achieve or surpass the performance of conventional decoders on various codes \cite{fit, swe}, and our findings reinforce this in the context of nonbinary codes. The general framework can be applied to other stabilizer codes, including larger distance codes or different finite-field alphabets, by training a suitable agent. This paves the way for flexible, high-accuracy decoders that can adapt to the detailed syndrome behavior of quantum codes. In summary, the RL-on-Greedy approach provides a scalable and broadly applicable enhancement to classical decoding algorithms, improving logical error rates without requiring substantial additional runtime. Its success on the $[[27,13,4]]_3$ code demonstrates the potential of combining traditional decoding heuristics with modern machine learning for quantum error correction.

\paragraph*{Acknowledgments.}
The first author was supported by TWAS/CNPq (Brazil) through fellowship No. 314966/2018-8. The implementation code for the reinforcement‐learning decoder introduced in this paper is publicly available at:\\
https://github.com/vnourozi/reinforcement-learning-decoder-for-AG-codes

%\bibliographystyle{plain}  % or any style you prefer
%\bibliography{references}

\end{document}